\theoremstyle{definition}
\newtheorem{proposition}{Proposition}
\newtheorem{corollary}{Corollary}
\newtheorem{example}{Example}
\newtheorem{definition}{Definition}
\newtheorem{lemma}{Lemma}
\begin{document} 

 \title{Stable Voting}
 
 \title{Stable Voting}
 \author{Wesley H. Holliday$^\dagger$ and Eric Pacuit$^\ddagger$ \\ \\
 $\dagger$ University of California, Berkeley {\normalsize (\href{mailto:wesholliday@berkeley.edu}{wesholliday@berkeley.edu})} \\
 $\ddagger$ University of Maryland {\normalsize (\href{mailto:epacuit@umd.edu}{epacuit@umd.edu})}}
 
 \date{February 11, 2023 \\
 {\small Extended version of article forthcoming in \textit{Constitutional Political Economy}.}}
 
\maketitle

 \begin{abstract} \noindent We propose a new single-winner voting system using ranked ballots: Stable Voting. The motivating principle of Stable Voting is that if a candidate $A$ would win without another candidate $B$ in the election, and $A$ beats $B$ in a head-to-head majority comparison, then $A$ should still win in the election with $B$ included (unless there is another candidate $A'$ who has the same kind of claim to winning, in which case a tiebreaker may choose between such candidates). We call this principle Stability for Winners (with Tiebreaking). Stable~Voting satisfies this principle while also having a remarkable ability to avoid tied outcomes in elections even with small~numbers~of~voters.  
\end{abstract}

\section{Introduction}\label{Introduction}
 
Voting reform efforts in the United States have achieved significant recent successes in replacing Plurality Voting with Instant Runoff Voting (IRV) for major political elections, including the 2018 San Francisco Mayoral Election and the 2021 New York City Mayoral Election. It is striking, by contrast, that \textit{Condorcet voting methods} are not currently used in any political elections.\footnote{The Condorcet voting method Nanson was used in Marquette, Michigan, in the 1920s (\citealt[p.~491]{Hoag1926}). We know of no cities using Condorcet methods since then, but see the Condorcet Canada Initiative at \url{https://condorcet.ca}.} Condorcet methods use the same ranked ballots as IRV but replace the counting of first-place votes with head-to-head comparisons of candidates: do more voters prefer candidate $A$ to candidate $B$ or prefer $B$ to $A$? If there is a candidate $A$ who beats every other candidate in such a head-to-head majority comparison, this so-called \textit{Condorcet winner} wins the election. 

As of July 2021, the Ranked Choice Voting Election Database maintained by FairVote\footnote{We are grateful to Deb Otis at FairVote for granting us access to the database.} contains 149 IRV elections from the United States for which the existence of a Condorcet winner can be determined from public data; and in every election, there was a Condorcet winner. Thus, in all of these elections, a Condorcet method would settle the election simply by the identification of the Condorcet winner,\footnote{Of course, had the official voting method been a Condorcet method, voters might have voted differently, perhaps leading to no Condorcet winner in one of these elections.} rather than the many rounds of iterative elimination of candidates and transferring of votes involved in some complicated IRV calculations. Moreover, in all but one of the elections, IRV chose the Condorcet winner anyway. Indeed, proponents of IRV claim that one of its advantages is that it almost always elects the Condorcet winner (\citealt{Dennis2008,Dennis2018}). The one case in the database in which IRV did not elect the Condorcet winner---the 2009 Mayoral Election in Burlington, Vermont, discussed below---was a source of controversy (see, e.g., \citealt{Gierzynski2009}, \citealt{Bouricius2009}). All of this raises the question: why not always simply elect the Condorcet~winner?

The problem is that a Condorcet winner is not guaranteed to exist (see Example \ref{NoCondorcet} below). Thus, some backup plan must be in place for how to elect a winner when there is no Condorcet winner. Proponents of IRV argue that ``The need for a backup plan has led to quite complicated Condorcet systems, perhaps the most popular being Schwartz Sequential Dropping,\footnote{Also known as the Schulze or Beat Path method (\citealt{Schulze2011}), discussed below. } which are ultimately more opaque and difficult to explain than IRV'' (\citealt{Dennis2008}) and that  ``If your voters are mathematically inclined, then consider Condorcet voting because it has some mathematical advantages but is more complicated to understand'' (\citealt{ONeill2021}). There are at least two responses to these claims about Condorcet methods being more complicated. First, there are Condorcet methods that are simple---or at least simpler than IRV. Take the Minimax (or Simpson-Kramer) method: the winner is the candidate whose worst head-to-head loss is smallest. Second, even if the backup plan is more complicated than that, we must consider which of the following approaches is preferable: (1) deciding almost every election in a simple way---just elect the Condorcet winner---and rarely applying a more complicated backup plan, perhaps more complicated than IRV, or (2) deciding many elections with a fairly complicated iterative elimination of candidates and transferring of votes, which may cause controversy when failing to elect a candidate who beats every other?

Moreover, there may be no objection to the use of sophisticated Condorcet systems in some contexts, including some elections in committees, clubs, etc.   For such contexts, we propose a new Condorcet voting method---hence a new backup plan when there is no Condorcet winner---that we call Stable Voting. Rather than aiming for the simplest possible backup plan, Stable Voting aims to generalize the following special property of Condorcet winners to all~winners:
\begin{itemize}
\item[$\bullet$] if a candidate $A$ would be the Condorcet winner without another candidate $B$ in the election, and $A$ beats $B$ in a head-to-head majority comparison, then $A$ is still the Condorcet winner in the election with $B$ included.
\end{itemize}
This is a kind of \textit{stability} property of Condorcet winners: you cannot dislodge a Condorcet winner $A$ by adding a new candidate $B$ to the election if $A$ beats $B$ in a head-to-head majority vote. For example, although the 2000 U.S. Presidential Election in Florida did not use ranked ballots, it is plausible (see \citealt{Magee2003}) that Al Gore ($A$) would have won without Ralph Nader ($B$) in the election, and Gore would have beaten Nader head-to-head. Thus, Gore should still have won with Nader included in the election. Indeed, Gore was plausibly the Condorcet winner with and without Nader in the election. Infamously, however, Plurality Voting chose George W. Bush when Nader was included.

The most obvious generalization of the special property of Condorcet winners above to all winners selected by some voting method is what we call \textit{Stability for Winners}\footnote{This principle, studied in \citealt{HP2020}, is almost the same as a principle mentioned in passing by Bordes \citeyearpar{Bordes1983}: ``you cannot turn $x$ into a loser by introducing new alternatives to which $x$ does not lose in duels'' (p.~125). Voting methods that satisfy Stability for Winners include Top Cycle (a.k.a.~GETCHA, as in \citealt{Schwartz1986}), Banks (\citealt{Banks1985}), the Uncovered Set (\citealt{Miller1980}), and Split Cycle (see \citealt{HP2020}).}:
\begin{itemize}
\item[$\bullet$] if a candidate $A$ would be a winner without another candidate $B$ in the election, and $A$ beats $B$ in a head-to-head majority comparison, then $A$ should still be a winner  in the election with $B$ included.
\end{itemize}
A difficulty that arises with Stability for Winners as a generalization of the special property of Condorcet winners above is that in some cases it is incompatible with tiebreaking to select a \textit{single} winner. The reason is that there may be more than one candidate with the same claim to winning as $A$ has in the statement of Stability for Winners, i.e., there may be another $A'$ who  also would have won without some candidate $B'$ in the election whom $A'$ beats head-to-head.\footnote{In fact, we have the following impossibility theorem (\citealt{HPZ2022}): there is no anonymous and neutral voting method that satisfies Stability for Winners and selects a unique winner in every election that is \textit{uniquely weighted} as defined in Section~\ref{BenefitsSection}.} But this does not mean we should give up on the idea of stability. The fact that a candidate $A$ would have won without another candidate $B$ in the election whom $A$ beats head-to-head is a prima facie reason why $A$ should win in the election with $B$. That reason can be undercut if---and only if, we argue---there is another candidate $A'$ who has the same kind of claim to winning; in this case, it is legitimate for a tiebreaker to choose between $A$ and $A'$ and any other candidates with the same kind of claim~to~winning. 

Thus, we propose the modified \textit{Stability for Winners with Tiebreaking}:
\begin{itemize}
\item[$\bullet$] if a candidate $A$ would be a winner  without another candidate $B$ in the election, and $A$ beats $B$ in a head-to-head majority comparison, then $A$ should still be a winner  in the election with $B$ included, unless there is another candidate $A'$ with the same kind of claim to winning, in which case a tiebreaker may choose between $A$ and $A'$ and any other candidates with the same kind of claim to winning.
\end{itemize}
In particular, in our view there is one situation in which there is a straightforward justification for why $A$ loses with $B$ in the election, given that $A$ wins without $B$ and that $A$ beats $B$ head-to-head: there are candidates $A'$ and $B'$ such that $A'$ wins without $B'$,  $A'$ beats $B'$ head-to-head, and \textit{the margin by which $A'$ beats $B'$ is larger than the margin by which $A$ beats $B$}. 

The voting method that we propose in this paper, Stable Voting, satisfies Stability for Winners with Tiebreaking by design. It also has a remarkable ability to avoid tied outcomes in elections even with small numbers of voters. We define Stable Voting in Section~\ref{StableVotingSection} and compare it to other voting methods in several example elections in Section \ref{ApplicationSection}. In Section \ref{BenefitsSection}, we discuss what we take to be some of the benefits of Stable Voting, and in Section \ref{CostsSection}, we discuss some of its costs.  We conclude in Section \ref{ConclusionSection}. For an implementation of Stable Voting, see \href{https://github.com/epacuit/stablevoting}{github.com/epacuit/stablevoting}; for formal verification of the proofs in this paper, see \href{https://github.com/asouther4/lean-social-choice}{github.com/asouther4/lean-social-choice}; and to run elections with Stable Voting, visit \href{https://stablevoting.org}{stablevoting.org}.

\section{Stable Voting defined}\label{StableVotingSection}

In an election with ranked ballots, given distinct candidates $A$ and $B$, the \textit{margin of} $A$ vs.~$B$ is the number of voters who rank $A$ above $B$ minus the number who rank $B$ above $A$. For example, if 7 voters rank $A$ above $B$ and 3 rank $B$ above $A$, then the margin of $A$ vs.~$B$ is 4, and the margin of $B$~vs.~$A$~is~$-4$.\footnote{In our view, it is important to measure the sizes of wins and losses in terms of \textit{margin}. Another approach agrees that  $A$ beats $B$ head-to-head if and only if the margin of $A$ over $B$ is positive, but it says that the size of $A$'s win over $B$ is measured solely in terms of the number of voters who rank $A$ above $B$, ignoring the number of voters who rank $B$ above $A$. E.g., if 51 voters rank $A$ above $B$, 49  rank $B$ above $A$, 50  rank $X$ above $Y$, and 50  are indifferent between $X$ and $Y$, the alternative approach says that $A$'s win over $B$ is a bigger win than $X$'s (Pareto-dominating) win over $Y$. We reject this alternative approach.} 

If there is a Condorcet winner in an election, Stable Voting selects that Condorcet winner, so no further calculations are necessary. But Stable Voting also works when there is no Condorcet winner. First, we will define a simplified version of Stable Voting, which we call Simple Stable Voting.

For a set of ranked ballots, Simple Stable Voting selects a winner as follows:
\begin{itemize}
\item[$\bullet$] If only one candidate $A$ appears on all ballots, then $A$ wins.
\item[$\bullet$] Otherwise list all head-to-head matches of the form $A$ vs.~$B$ in order from the largest to smallest margin of $A$~vs.~$B$. Find the first match $A$ vs.~$B$ in the list such that $A$ wins according to Simple Stable Voting \textit{after $B$ is removed from all ballots}; this $A$ is the winner for the original set of~ballots.\footnote{\label{TiesNote}If there are multiple matches with the same margin, then there may be more than one earliest match $A$~vs.~$B$ in the list  such that the first candidate wins after the second is removed from the ballots. In this case, all such $A$'s are tied Simple Stable Voting winners for the original set of ballots. See Section \ref{BenefitsSection} for discussion of the rarity of such ties.}
\end{itemize}
Such an $A$ is guaranteed to exist (see Proposition \ref{QRLem}).  See Examples \ref{NoCondorcet}, \ref{4cands3cycles}, and \ref{EaddedEx} for concrete examples of determining the winner.

Note that our list of head-to-head matches includes those where $A$ has a \textit{negative} margin vs.~$B$. Indeed, we cannot ignore such matches; for example, in a three-candidate election where $X$ beats $Y$ head-to-head, $Y$ beats $Z$ head-to-head, and $Z$ beats $X$ head-to-head, there is no match $A$ vs.~$B$ with a positive margin such that $A$ wins after removing $B$.\footnote{Assuming that Majority Voting governs two-candidate elections.} However, the fact that $A$ would win without $B$ in the election, and $A$ \textit{loses} head-to-head to $B$, does not seem to provide even a prima facie reason why $A$ should win in the election with $B$ included. For this reason, Stable Voting departs from Simple Stable Voting by not including every match $A$ vs.~$B$ where the margin of $A$ vs.~$B$ is negative. But Stable Voting does include some matches with negative margins.

The key distinction is that when there is a \textit{majority cycle}---a list of candidates each of whom beats the next head-to-head and the last of which beats the first, as with $X,Y,Z$ above---not every loss can be considered a \textit{defeat} that disqualifies a candidate from winning. While $A$ may lose head-to-head to $B$, $A$ may beat a candidate $C$ who in turns beats $B$; in fact, the margins by which $A$ beats $C$ and $C$ beats $B$ may be at least as large as the margin by which $B$ beats $A$. In this case, we say that $B$ does not \textit{defeat} $A$. In general, we say that $B$ does not defeat $A$ if we can make a list of candidates starting with $A$ and ending with $B$ such that the margin of each candidate vs.~the next  in the list  is at least the margin of $B$ vs.~$A$; if there is no such list, then $B$ defeats $A$ (this notion of defeat is defended at length in \citealt{HP2020b}). Now, the fact that $A$ would win without $B$ in the election, and $B$ \textit{does not defeat $A$}, \textit{does} provide a prima facie reason why $A$ should win in the election with $B$ included---only to be undercut by the tiebreaking considerations in Section \ref{Introduction}.

Thus, Stable Voting may be defined analogously to Simple Stable Voting except with the following proviso:
\begin{itemize}
\item[$\bullet$] When we reach a match $A$ vs.~$B$ with a negative margin, we consider such a match only if $B$ does not defeat $A$.
\end{itemize}
In fact, the proviso applies uniformly to matches regardless of margin, but in the case of non-negative margins there is no need to check it, since if $A$ has a non-negative margin vs.~$B$, then $B$ does not defeat $A$. 

It can be proved that the Stable Voting winner is always \textit{undefeated}, i.e., not defeated by anyone.\footnote{The proof is by induction on the number of candidates. If $A$ wins, then there is some $B$ who does not defeat $A$ such that $A$ wins without $B$ in the election. Hence by the inductive hypothesis, $A$ is undefeated in the election without $B$. Then since $B$ does not defeat $A$ in the full election, it is easy to see from the definition of defeat that $A$ is still undefeated.} As a result, Stable Voting may also be defined with a modified first clause and an even shorter list of matches:
\begin{itemize}
\item[$\bullet$] If there is a single undefeated candidate $A$, then $A$ wins.
\item[$\bullet$] Otherwise list all head-to-head matches of the form $A$ vs.~$B$, where $A$ is undefeated, in order from the largest to the smallest margin of $A$ vs.~$B$. Find the first such that $A$ wins according to  Stable Voting \textit{after $B$ is removed from all ballots}; this $A$ is the winner for the original set of~ballots.
\end{itemize}
Thus, Stable Voting may be seen as a way of breaking ties between the undefeated candidates (and hence as a refinement of the Split Cycle rule in \citealt{HP2020} that finds all undefeated candidates). 

In addition to the important theoretical distinction between Simple Stable Voting and Stable Voting, there is a practical one: our current implementation of Stable Voting is considerably faster than that of Simple Stable Voting for elections with many candidates. On the other hand, interestingly, in simulations these methods select the same winners nearly 100\% of the time.\footnote{In fact, it is open whether Stable Voting and Simple Stable Voting ever select different winners for elections that are \textit{uniquely weighted} as defined in Section \ref{BenefitsSection}. Moreover, if we handle a tie between margins (recall Footnote \ref{TiesNote}) by declaring $A$ a winner if there is some way of breaking the tie between margins so that $A$ wins (so-called parallel universe tiebreaking), then it is open whether the methods ever select different winners.}Since we wish to illustrate both methods, we will not use the modified first clause or pare down the list of matches as above in the examples in Section~\ref{ApplicationSection}.

\section{Stable Voting applied}\label{ApplicationSection}

With only two candidates, Stable Voting is the same as Majority Voting. For if $A$ beats $B$ head-to-head, then $A$ vs.~$B$ (with positive margin) appears on the list of matches by descending margin before $B$ vs.~$A$ (with negative margin), and $A$ wins after $B$ is removed from all ballots, as $A$ is the only candidate left. 

With more than two candidates, Stable Voting becomes more interesting.\footnote{Although it will not affect the following examples, if so desired, one can modify Stable Voting with the rule that if there is a unique candidate with no losses, called a \textit{weak Condorcet winner}, then we elect that candidate; and if there are several, then we apply the Stable Voting procedure but only to matches $A$ vs.~$B$ where $A$ is a weak Condorcet winner, thereby guaranteeing that the winner will be among the weak Condorcet winners.}

\begin{example}\label{VermontExample} In the 2009 Mayoral Election in Burlington, Vermont, the progressive candidate Bob Kiss was elected using IRV. However, checking the head-to-head matches of the candidates revealed that the Democrat Andy Montroll beat Bob Kiss and every other candidate head-to-head, as shown on the left of Figure \ref{VermontGraph}.\footnote{\label{Unranked}For the purposes of this Condorcet analysis, when a voter submits a  linear order of a proper subset of the candidates, we regard this ballot as a strict weak order of the entire set of candidates by placing all unranked candidates in a tie below all ranked candidates. This is also the methodology adopted by FairVote for their Condorcet analyses.} Thus, Montroll was the Condorcet winner and hence the Stable Voting winner. Had the Republican, Kurt Wright, who lost the IRV election not been included in the election, then (other things equal) the IRV winner would have been Montroll. Thus, Wright's inclusion in the IRV election spoiled the election for Montroll. For further discussion of this spoiler effect for IRV, see \url{https://www.electionscience.org/library/the-spoiler-effect} (accessed 7/22/2021).

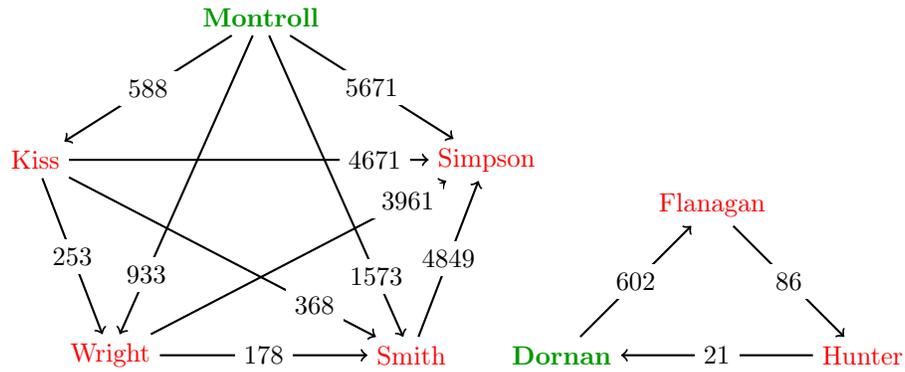
\begin{figure}[h]

\begin{center}

\begin{tikzpicture}

\node at (-1,1.1) (c) {\textcolor{red}{Kiss}}; 
\node at (5,1.1) (a) {\textcolor{red}{Simpson}}; 
\node  at (0,-1.5) (d) {\textcolor{red}{Wright}}; 
\node at (2,3) (b) {\textcolor{darkgreen}{\textbf{Montroll}}}; 
\node at (4,-1.5) (e) {\textcolor{red}{Smith}}; 

\path[->,draw,thick] (b) to node[fill=white] {$5671$} (a);
\path[->,draw,thick] (b) to node[fill=white] {$588$} (c); 
\path[->,draw,thick] (c) to node[fill=white,pos=.85] {$4671$} (a);
\path[->,draw,thick] (c) to node[fill=white] {$253$} (d); 
\path[->,draw,thick] (b) to node[fill=white,pos=.8] {$933$} (d);
\path[->,draw,thick] (d) to node[fill=white] {$178$} (e); 
\path[->,draw,thick] (c) to node[fill=white,pos=.8] {$368$} (e);
\path[->,draw,thick] (e) to node[fill=white] {$4849$} (a); 
\path[->,draw,thick] (d) to node[fill=white,pos=.87] {$3961$} (a);
\path[->,draw,thick] (b) to node[fill=white,pos=.8] {$1573$} (e);

\node at (6,-1.5) (D) {\textcolor{darkgreen}{\textbf{Dornan}}}; 
\node at (8,.5) (F) {\textcolor{red}{Flanagan}}; 
\node at (10,-1.5) (H) {\textcolor{red}{Hunter}}; 

\path[->,draw,thick] (D) to node[fill=white] {\textcolor{black}{$602$}} (F);
\path[->,draw,thick] (F) to node[fill=white] {\textcolor{black}{$86$}} (H);
\path[->,draw,thick] (H) to node[fill=white] {$21$} (D);

\end{tikzpicture}

  \end{center}
  \caption{Left: head-to-head results for the 2009 Burlington Mayoral Election. An arrow from candidate $A$ to candidate $B$ indicates that $A$ beat $B$ head-to-head. The arrow is labeled with $A$'s margin of victory vs.~$B$. Right: head-to-head results for candidates in the Smith set of the 2007 Glasgow City Council election for Ward 5 (Govan).}\label{VermontGraph}
  \end{figure}
\end{example}

\begin{example}\label{NoCondorcet} For an example with no Condorcet winner, we consider the 2007 Glasgow City Council election for Ward 5 (Govan), available in the Preflib database (\citealt{Mattei2013}). Each voter submitted a linear order of some subset of the eleven candidates, which we convert to a strict weak order with all unranked candidates tied at the bottom (as in Footnote \ref{Unranked} and Preflib's file 00008-00000009.toc). The election was run using Single-Transferable Vote to elect four candidates.\footnote{See \url{https://en.wikipedia.org/wiki/2007_Glasgow_City_Council_election}.} However, we can also run single-winner voting methods on these ballots. Plurality and IRV select Allison Hunter, while Plurality with Runoff selects John Flanagan.\footnote{For Plurality with Runoff, Hunter and Flanagan advance to the runoff, and then Flanagan beats Hunter by 86 votes.} In this election, there is no Condorcet winner. The \textit{Smith set}---the smallest set of candidates each of whom beats each candidate outside the set head-to-head---consists of three candidates, Stephen Dornan, Flanagan, and Hunter, as shown on the right of Figure~\ref{VermontGraph}, in a \textit{majority cycle}: Dornan beats Flanagan by 602, Flanagan beats Hunter by 86, and Hunter beats Dornan by~21. 

Stable Voting determines the winner by going down the list of matches:
\begin{enumerate}
\item Dornan vs. Flanagan: margin of $602$. \\
Dornan loses (to Hunter) after removing Flanagan. Continue to next match:
\item Flanagan vs. Hunter: margin of $86$.  \\
Flanagan loses (to Dornan) after removing Hunter. Continue to next match:
\item Hunter vs. Dornan: margin of $21$.  
\\ Hunter loses (to Flanagan) after removing Dornan. Continue to next match:
\item Dornan vs. Hunter: margin of $-21$.  \\ 
(Dornan is not \textit{defeated} by Hunter: Dornan beats Flanagan by 602, who beats Hunter by~86, and both of these margins are greater than 21.)\\
Dornan wins (against Flanagan) after removing Hunter. \textbf{Dornan is elected}.
\end{enumerate}
\end{example}
That Stable Voting elects Dornan in Example \ref{NoCondorcet} is a consequence of the following characterization of when a candidate is a Stable Voting winner in an election with three candidates.

\begin{proposition}\label{3cands} In any election with exactly three candidates, a candidate $A$ is a Stable Voting winner if and only if one of the following holds: 
\begin{enumerate}
\item $A$ has no head-to-head losses, and $A$'s maximal margin against another candidate is maximal among all margins between a candidate with no losses and another candidate;
\item  each candidate has a head-to-head loss, and the margin by which another candidate beats $A$ head-to-head is minimal among all positive margins.
\end{enumerate}
\end{proposition}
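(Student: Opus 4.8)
The plan is to reduce Stable Voting on three candidates to a purely combinatorial condition on the head-to-head margins, exploiting the fact that after one candidate is deleted only two remain. Write the three candidates as $\{A,B,C\}$ and let $m(X,Y)$ denote the margin of $X$ vs.~$Y$. Since Stable Voting on two candidates coincides with Majority Voting (as established at the start of this section), the key observation is that \emph{the match ``$X$ vs.~$Y$'' is successful} --- i.e.\ $X$ is a (possibly tied) Stable Voting winner after $Y$ is removed --- \emph{if and only if $X$ beats or ties the third candidate $Z \notin \{X,Y\}$ head-to-head}. Crucially, whether ``$X$ vs.~$Y$'' succeeds depends only on the $X$-vs-$Z$ comparison, not on the margin $m(X,Y)$ that fixes its position in the descending list. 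For each candidate $X$, let $s(X)$ be the largest margin $m(X,Y)$ over all successful matches ``$X$ vs.~$Y$'' (and $s(X)=-\infty$ if no such match succeeds). By the definition of Stable Voting together with its tiebreaking footnote, $A$ is a Stable Voting winner exactly when $s(A) > -\infty$ and $s(A) = \max_X s(X)$.

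First I would compute $s(X)$ according to $X$'s number of losses. If $X$ has no losses, then both ``$X$ vs.~$B$'' and ``$X$ vs.~$C$'' succeed (since $X$ beats or ties every candidate), so $s(X)$ equals $X$'s maximal margin against another candidate, which is $\geq 0$. If $X$ has exactly one loss, say to $W$, then only ``$X$ vs.~$W$'' can succeed, and it does (as $X$ beats or ties the remaining candidate), giving $s(X)=m(X,W)<0$, the negative of $X$'s loss margin. If $X$ has two losses, neither match succeeds and $s(X)=-\infty$.

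Next I would assemble these facts into the two cases. For Case 1, suppose $A$ has no losses. Then $s(A)\geq 0$, while any candidate with one loss has $s<0$ and any candidate with two losses has $s=-\infty$; hence $\max_X s(X)$ is attained only among loss-free candidates, whose $s$-values are exactly their maximal margins. Thus $s(A)=\max_X s(X)$ iff $A$'s maximal margin is at least as large as that of every loss-free candidate, which is precisely the statement that $A$'s maximal margin is maximal among all margins between a loss-free candidate and another candidate --- condition (1). For Case 2, suppose $A$ has a loss yet is still a winner; since a loss-free candidate would have $s\geq 0 > s(A)$, no candidate can be loss-free, and with three candidates this forces each candidate to have exactly one loss (a majority cycle). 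In that cycle each $s(X)$ is the negative of $X$'s loss margin, so $s(A)=\max_X s(X)$ iff $A$'s loss margin is minimal among the three loss margins; and since these three loss margins are exactly all the positive margins occurring in the election, this is condition (2). Conversely, one checks directly that (1) and (2) each force $s(A)=\max_X s(X)$.

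The main obstacle is really the reduction step and its bookkeeping rather than any deep argument: one must (i) verify the decoupling observation that a match's success is governed by the third candidate and not by the match's own margin, (ii) handle ties correctly by reading ``wins after removal'' as ``beats or ties,'' so that weak Condorcet winners are treated uniformly, and (iii) confirm that the structural dichotomy is exhaustive --- in particular that the leftover configurations ($A$ a Condorcet loser, or $A$ with one loss while some other candidate has none) make $s(A)<\max_X s(X)$ and falsify both (1) and (2), so that no case is missed.
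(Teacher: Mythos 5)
Your proof is correct. The paper explicitly omits the proof as ``easy,'' and your argument---reducing each match ``$X$ vs.~$Y$'' to the two-candidate majority contest between $X$ and the third candidate, then comparing the resulting scores $s(X)$ across the loss-free case, the three-cycle case, and the two-loss case---is precisely the direct verification the authors had in mind, with the tie and exhaustiveness bookkeeping handled properly.
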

\begin{proof} First suppose that each candidate has a head-to-head loss. Since there are exactly three candidates in the election, it follows that each candidate also has a head-to-head win. Since each candidate has a head-to-head loss, if $A$ has a positive margin over $B$, then after removing $B$ from the ballots, $A$ loses to the third candidate, say $C$. Since each candidate also has a head-to-head win, if $A$ has a negative margin over $B$, then after removing $B$ from the ballots, $A$ wins against the third candidate. Hence the first matches  $A$~vs.~$B$ in the list of matches such that $A$ wins after removing $B$ from the ballots are the matches with the smallest negative margins; thus, the margin by which another candidate beats $A$ head-to-head is minimal among all positive margins.

Now suppose there is a candidate with no head-to-head-losses. If $A$ has no losses, then after removing any other candidate $B$ from the ballots, $A$ is among the winners; by contrast, if $A$ has a loss, then for any candidate $B$ such that $A$ has a non-negative margin over $B$, after removing $B$ from the ballots, $A$ loses to the third candidate. It follows that the first matches $A$ vs.~$B$ in the list of matches such that $A$ wins after removing $B$ are such that $A$ has no losses and the margin of $A$ over $B$ is maximal among all margins between a candidate with no losses and another candidate.\end{proof}

\begin{example}\label{4cands3cycles} Next we consider a hypothetical election with four candidates and no Condorcet winner. The head-to-head margins are shown on the left in Figure \ref{ExFigA}. The arrow from $A$ to $B$ labeled by $6$ indicates that the margin of $A$ vs.~$B$ is 6 and hence the margin of $B$ vs.~$A$ is $-6$. The arrow from $B$ to $C$ labeled by 4 indicates that the margin of $B$ vs.~$C$ is 4, and so on.

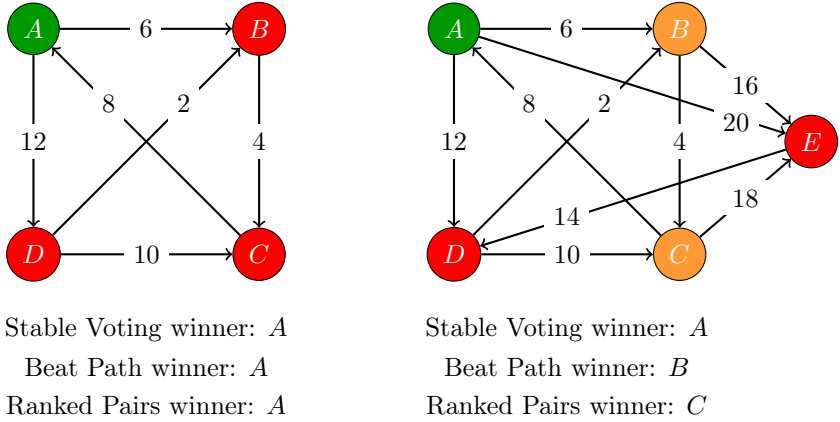
\begin{figure}[h]
\begin{center}
\begin{minipage}{2in}

\begin{center}

\begin{tikzpicture}

\node[fill=darkgreen,circle,draw, minimum width=0.25in] at (0,3) (a) {\textcolor{white}{$A$}}; 
\node[fill=red,circle,draw,minimum width=0.25in] at (3,3) (b) {$\textcolor{white}{B}$}; 
\node[fill=red,circle,draw,minimum width=0.25in] at (3,0) (c) {$\textcolor{white}{C}$}; 
\node[fill=red,circle,draw,minimum width=0.25in] at (0,0) (d) {$\textcolor{white}{D}$}; 

\path[->,draw,thick] (a) to node[fill=white] {\textcolor{black}{$6$}} (b);
\path[->,draw,thick] (b) to node[fill=white] {$4$}  (c);
\path[->,draw,thick] (c) to  (a);
\path[->,draw,thick] (a) to node[fill=white] {\textcolor{black}{$12$}}  (d);
\path[->,draw,thick] (d) to   (b);
\path[->,draw,thick] (d) to node[fill=white] {\textcolor{black}{$10$}}  (c);

\node[fill=white] at (2,2)  {$2$}; 
\node[fill=white] at (1,2)  {$8$}; 

\node at (1.5,-1) {Stable Voting winner: $A$};
\node at (1.5,-1.5) {Beat Path winner: $A$};
\node at (1.5,-2) {Ranked Pairs winner: $A$};

  \end{tikzpicture}
  \end{center}
  \end{minipage}\qquad\quad
  \begin{minipage}{2in}\begin{tikzpicture}

\node[fill=darkgreen,circle,draw, minimum width=0.25in] at (0,3) (a) {\textcolor{white}{$A$}}; 
\node[fill=darkyellow,circle,draw,minimum width=0.25in] at (3,3) (b) {\textcolor{white}{$B$}}; 
\node[fill=darkyellow,circle,draw,minimum width=0.25in] at (3,0) (c) {\textcolor{white}{$C$}}; 
\node[fill=red,circle,draw,minimum width=0.25in] at (0,0) (d) {\textcolor{white}{$D$}}; 
\node[fill=red,circle,draw,minimum width=0.25in] at (4.75,1.5) (e) {\textcolor{white}{$E$}}; 

\path[->,draw,thick] (a) to node[fill=white] {$6$} (b);
\path[->,draw,thick] (b) to node[fill=white] {$4$}  (c);
\path[->,draw,thick] (c) to  (a);
\path[->,draw,thick] (a) to node[fill=white] {\textcolor{black}{$12$}}  (d);
\path[->,draw,thick] (d) to   (b);
\path[->,draw,thick] (d) to node[fill=white] {$10$}  (c);

\path[->,draw,thick] (a) to  (e);
\path[->,draw,thick] (b) to node[fill=white] {\textcolor{black}{$16$}}  (e);
\path[->,draw,thick] (c) to node[fill=white] {\textcolor{black}{$18$}}  (e);
\path[->,draw,thick] (e) to   (d);

\node[fill=white] at (2,2)  {$2$}; 
\node[fill=white] at (1,2)  {$8$}; 
\node[fill=white] at (1.5,.5)  {$14$}; 
\node[fill=white] at (3.75,1.75)  {$20$}; 

\node at (1.5,-1) {Stable Voting winner: $A$};
\node at (1.5,-1.5) {Beat Path winner: $B$};
\node at (1.5,-2) {Ranked Pairs winner: $C$};

  \end{tikzpicture}

\end{minipage}
  \end{center}
  \caption{head-to-head results for elections in Examples \ref{4cands3cycles} and \ref{EaddedEx}}\label{ExFigA}
  \end{figure}
  
 Stable Voting determines the winner of the election by going down the list of matches (recall from Proposition \ref{3cands} that in a three-candidate election in which each candidate has one loss, the candidate with the smallest loss wins):
  \begin{itemize}
\item[$\bullet$] $A$ vs.~$D$: margin of 12.\\
$A$ loses (as $C$ wins) after removing $D$. Continue to next match:
\item[$\bullet$] $D$ vs.~$C$: margin of 10.\\
$D$ loses (as $A$ wins) after removing $C$. Continue to next match:
\item[$\bullet$] $C$ vs. $A$: margin of 8.\\
$C$ loses (as $D$ wins) after removing $A$. Continue to next match:
\item[$\bullet$] $A$ vs. $B$: margin of 6.\\
$A$ wins after removing $B$. \textbf{$A$ is elected.}  
\end{itemize}

\end{example}

\noindent For those familiar with Beat Path (\citealt{Schulze2011}) and Ranked Pairs (\citealt{Tideman1987}), we note that they also elect $A$ for the election on the left in Figure \ref{ExFigA}.

\begin{example}\label{EaddedEx} We now modify the election from Example \ref{4cands3cycles} by adding another candidate $E$, as shown on the right of Figure \ref{ExFigA}. Stable Voting determines the winner of the election by going down the list of matches:
\begin{itemize}
\item[$\bullet$] $A$ vs.~$E$: margin of 20.\\
$A$ wins after removing $E$ (see Example \ref{4cands3cycles}). \textbf{$A$ is elected}. 
\end{itemize}

\noindent By contrast, when the loser $E$ whom $A$ beats head-to-head joins the election, Beat Path and Ranked Pairs change their choices to $B$ and $C$, respectively, thereby violating Stability for Winners with Tiebreaking (see Definition \ref{StableDef}).\end{example}

\section{Benefits of Stable Voting}\label{BenefitsSection}

In this section, we discuss some of the beneficial properties of Stable Voting---as well as Simple Stable Voting, for which the properties also hold. A \textit{profile} $\mathbf{P}$ is an assignment to each voter in a finite electorate of a ranking (possibly incomplete) of the finitely many candidates in the election. A profile is \textit{uniquely weighted} if there are no distinct matches $A$ vs.~$B$ and $A'$~vs.~$B'$ with the same margin.  Given a candidate $B$ in $\mathbf{P}$, let $\mathbf{P}_{-B}$ be the profile obtained from $\mathbf{P}$ by removing $B$ from all rankings. We use `SV' to abbreviate `Stable Voting'.

\begin{proposition}\label{QRLem} Any profile $\mathbf{P}$ has at least one SV winner. Moreover, if $\mathbf{P}$ is uniquely weighted,  then there is only one SV winner in $\mathbf{P}$ (cf.~Footnote \ref{TiesNote}).
\end{proposition}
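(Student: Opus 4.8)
The plan is to induct on the number of candidates appearing in $\mathbf{P}$, exploiting the recursive definition of SV: the winners of $\mathbf{P}$ are read off from the winners of the reduced profiles $\mathbf{P}_{-B}$, each of which has one fewer candidate. Since removing a candidate strictly decreases the candidate count, the recursion bottoms out at the one-candidate case; this is what makes SV well-defined in the first place and makes induction the natural tool. The base case is a profile with a single candidate $A$ (the first clause of the definition), which is its own unique SV winner, so both assertions hold trivially there.

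For the inductive step, I would fix $\mathbf{P}$ with $n \geq 2$ candidates and assume both claims for all profiles with fewer than $n$ candidates. Call an ordered pair $A$ vs.~$B$ of distinct candidates a \emph{qualifying match} if $A$ is an SV winner of $\mathbf{P}_{-B}$; by the definition of SV (and the footnote on ties), the SV winners of $\mathbf{P}$ are exactly those candidates $A$ admitting a qualifying match $A$ vs.~$B$ of maximal margin among all qualifying matches. To prove existence, pick any candidate $B$: since $n \geq 2$, the profile $\mathbf{P}_{-B}$ has at least one candidate, hence by the inductive hypothesis an SV winner $A$, and $A \neq B$ because $B$ is absent from $\mathbf{P}_{-B}$. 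Thus $A$ vs.~$B$ is a genuine qualifying match, so the set of qualifying matches is nonempty and its largest-margin member delivers an SV winner of $\mathbf{P}$.

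For uniqueness when $\mathbf{P}$ is uniquely weighted, the key point is that unique weighting says precisely that distinct matches carry distinct margins (forcing in particular every margin to be nonzero, since otherwise $A$ vs.~$B$ and $B$ vs.~$A$ would share a margin). Hence, going down the strictly decreasing list of margins, the first qualifying match is a single well-defined match, which names a single winner $A$. I would stress that this argument uses only the existence half of the inductive hypothesis --- to guarantee that qualifying matches exist --- together with the distinctness of margins in $\mathbf{P}$ itself; it does \emph{not} require the reduced profiles to have unique winners, so the two halves of the proposition fit together without any circularity.

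The work here is careful bookkeeping of the recursion rather than any delicate estimate, and I anticipate no serious obstacle beyond stating a few points precisely: that ``$A$ is an SV winner of $\mathbf{P}_{-B}$'' is the correctly indexed recursive call; that the match list ranges over \emph{ordered} pairs, so a qualifying match may carry a negative margin and still count toward existence; and that unique weighting is a statement about all matches, including reversed orientations, which is exactly what pins the first qualifying match --- and therefore the winner --- down to one.
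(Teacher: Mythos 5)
Your proposal is correct and follows essentially the same route as the paper's proof: induction on the number of candidates, using the inductive hypothesis to show that some match $A$ vs.~$B$ with $A$ an SV winner of $\mathbf{P}_{-B}$ exists, and then taking the earliest such match in the descending-margin list, which is unique when $\mathbf{P}$ is uniquely weighted. Your added remarks (the base case, that qualifying matches may have negative margin, and that uniqueness needs only the existence half of the inductive hypothesis) are accurate elaborations of the same argument.
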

\begin{proof} By induction on the number of candidates. We claim there is some match $A$ vs.~$B$ in $\mathbf{P}$ such that  $A$ is an SV winner in $\mathbf{P}_{-B}$, and $B$ does not defeat $A$ in the sense of Section \ref{StableVotingSection}. For contradiction, suppose not. Consider any candidate $B_1$ in $\mathbf{P}$. In  $\mathbf{P}_{-B_1}$, there is an SV winner $B_2$ by the inductive hypothesis. By our supposition, $B_1$  defeats $B_2$. In $\mathbf{P}_{-B_2}$, there is an SV winner $B_3$ by the inductive hypothesis. By our supposition, $B_2$ defeats $B_3$. Continuing in this way, since there are only finitely many candidates, we obtain a list $B_k,\dots,B_n$ of candidates each of whom defeats the next such that $B_k=B_n$; but it is easy to see that such a list contradicts the definition of defeat. Now since there is some head-to-head match such that the first candidate wins after removing the second, and the second does not defeat the first, there is an earliest (i.e., first or tied for first) such match in the list of matches by descending margin. Hence there is at least one SV winner in $\mathbf{P}$. If $\mathbf{P}$ is uniquely weighted, there is a \textit{unique} first such match in the list, so there is only one SV winner.\end{proof}

The second part of Proposition \ref{QRLem} significantly undersells the power of SV to pick a unique winner. Figure \ref{ResolutenessFig} shows computer simulation results estimating the percentage of all linear profiles\footnote{A linear profile is a profile in which each voter submits a linear order of the candidates, disallowing ties. Simulations for non-linear profiles are of course also possible, as are simulations using non-uniform probability distributions on the set of profiles for a given number of candidates and voters. We will report on such simulation results in future work.} for a fixed number of candidates and voters in which a given voting method declares a tie between multiple winners, before any randomized tiebreaking. SV does much better than the other methods\footnote{For the version of IRV used in our simulations, see Footnote \ref{PUT}. We also tried a less standard version of IRV where if there are multiple candidates tied for the fewest first-place votes, \textit{all} are eliminated (unless this would eliminate all candidates). This leads to fewer ties than the more standard version of IRV, but SV still outperforms this version of~IRV.} at selecting a unique winner for profiles with around 5,000 voters or fewer. Moreover, for many other voting methods, for a fixed number of voters, the percentage of profiles producing a tie increases as we increase the number of candidates. Remarkably, for SV, the opposite happens: the percentage of profiles with a tie \textit{decreases} as we increase the number of candidates.

\begin{figure}[h]
\begin{center}
\includegraphics[scale=0.505]{./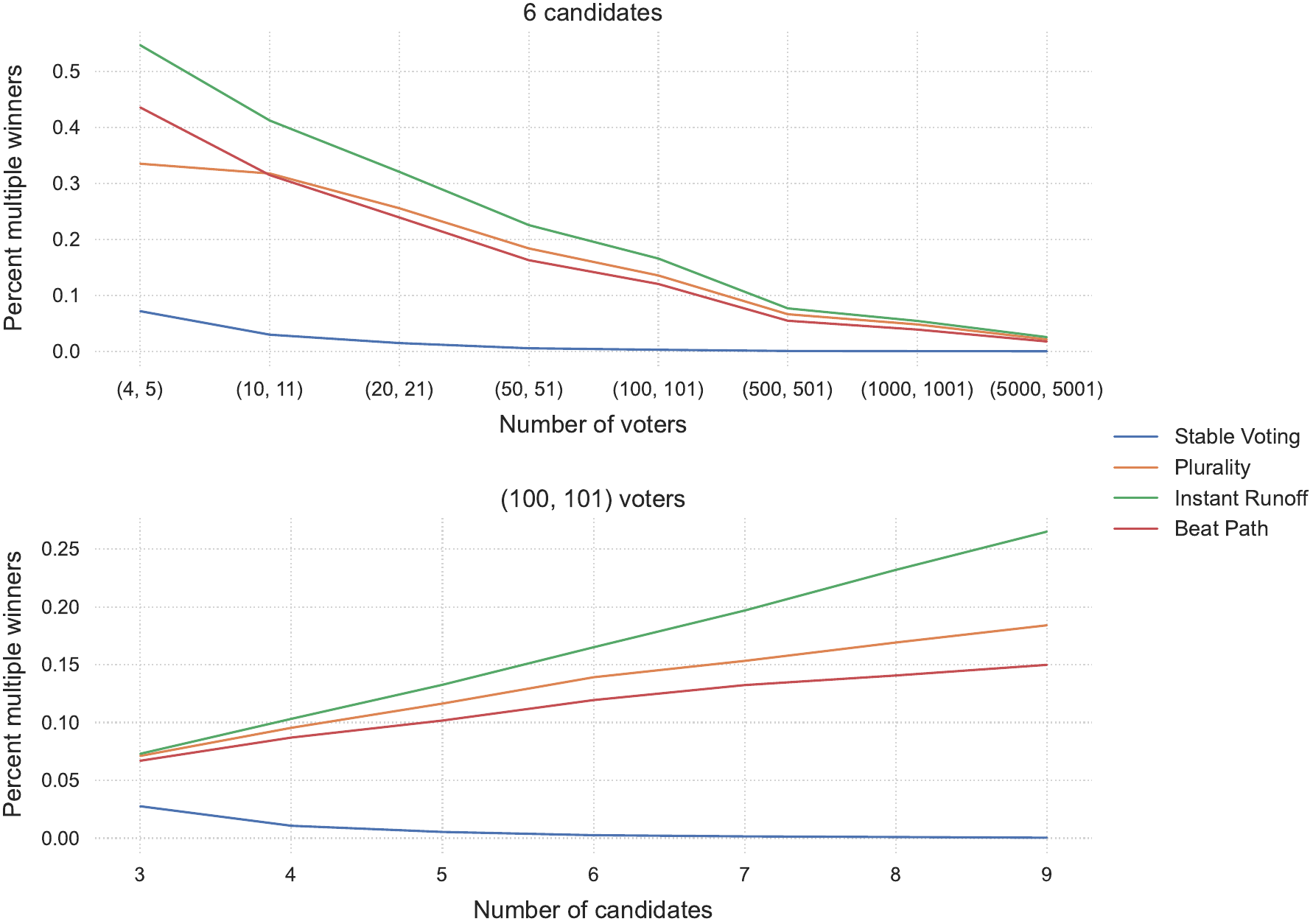}
\caption{estimates of the percentage of linear profiles for a given number of candidates and voters in which a given voting method selects multiple tied winners. For each data point, we randomly sampled 50,000 profiles for the even number of voters and 50,000 profiles for the odd number. We also obtained data for Minimax, but its graph coincides with that of Beat Path. We did not include Ranked Pairs due to the difficulty of computing (the anonymous, neutral, and deterministic version of) Ranked Pairs in non-uniquely-weighted profiles (\citealt{BrillFischer2012}).}\label{ResolutenessFig}
\end{center}
\end{figure}

Next we prove that Stable Voting satisfies the motivating principle discussed in Section \ref{Introduction}: Stability for Winners with Tiebreaking.

\begin{definition}\label{StableDef} Given a voting method $\mathsf{M}$ and profile $\mathbf{P}$, a candidate $A$ is \textit{stable for $\mathsf{M}$ in $\mathbf{P}$} if there is some $B$ in $\mathbf{P}$ whom $A$ beats head-to-head such that $A$ wins according to $\mathsf{M}$ in the profile $\mathbf{P}_{-B}$. We say that $\mathsf{M}$ satisfies \textit{stability for winners with tiebreaking} if for every profile $\mathbf{P}$, if there is a candidate who is stable for $\mathsf{M}$ in $\mathbf{P}$, then whoever wins in $\mathbf{P}$ according to $\mathsf{M}$ is stable for $\mathsf{M}$ in~$\mathbf{P}$.\end{definition}

\begin{proposition} SV satisfies stability for winners with tiebreaking.
\end{proposition}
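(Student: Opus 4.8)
The plan is to read the conclusion directly off the recursive definition of SV, exploiting the single key feature that an SV winner is always produced by the \emph{largest-margin} match in which the first candidate wins after the second is deleted. First I would dispose of the base case: if only one candidate appears in $\mathbf{P}$, then no candidate beats another head-to-head, so no candidate is stable for SV in $\mathbf{P}$, and the desired implication holds vacuously. For the remaining case (at least two candidates), I would record a reformulation of the definition of SV. Call a match $A$ vs.~$B$ in $\mathbf{P}$ \emph{qualifying} if $A$ is an SV winner in $\mathbf{P}_{-B}$ (well defined, and such a winner exists by Proposition \ref{QRLem}). By the definition of SV, the SV winners in $\mathbf{P}$ are exactly the first candidates $A$ of the qualifying matches $A$ vs.~$B$ whose margin is maximal among all qualifying matches, i.e.\ the earliest matches in the descending-margin list.

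Next I would unwind Definition \ref{StableDef}: a candidate $A$ is stable for SV in $\mathbf{P}$ if and only if there is a qualifying match $A$ vs.~$B$ of \emph{positive} margin, since ``$A$ beats $B$ head-to-head'' is precisely the assertion that the margin of $A$ vs.~$B$ is positive. With these two observations the argument is immediate. Suppose some candidate is stable for SV in $\mathbf{P}$. Then there is a qualifying match of positive margin, so the maximal margin $\mu$ among qualifying matches satisfies $\mu > 0$. Any SV winner $A^\ast$ is the first candidate of a qualifying match $A^\ast$ vs.~$B^\ast$ of margin $\mu > 0$, whence $A^\ast$ beats $B^\ast$ head-to-head and $A^\ast$ wins in $\mathbf{P}_{-B^\ast}$, so $A^\ast$ is itself stable for SV. Thus a candidate stable for SV wins in $\mathbf{P}$, as stability for winners with tiebreaking requires. (The same argument in fact yields the stronger footnoted form: when any stable candidate exists, \emph{every} SV winner is stable.)

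I do not expect a genuine obstacle, as the property is essentially built into the design of SV; the only point needing care is the sign of the winning margin. The match by which SV selects its winner may have non-positive margin---as in Example \ref{NoCondorcet}, where Dornan wins through the match Dornan vs.~Hunter of margin $-21$---and in that case the winner need not be stable. The substance of the proof is exactly the observation that this happens only when \emph{no} candidate is stable (there being then no positive-margin qualifying match), so the implication is vacuous, whereas the presence of even a single stable candidate forces the maximal-margin qualifying match, and hence the winner, to have positive margin.
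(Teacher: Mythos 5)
Your proposal is correct and follows essentially the same route as the paper's proof: observe that a stable candidate supplies a positive-margin qualifying match, so the maximal-margin qualifying match that produces any SV winner must itself have positive margin, making that winner stable. Your write-up merely makes explicit the "it follows" step that the paper leaves terse, and notes the vacuous one-candidate case.
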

\begin{proof} If some candidate $A$ is stable for SV in $\mathbf{P}$, this means there is a match $A$ vs.~$B$ with a \textit{positive} margin such that $A$ is an SV winner in $\mathbf{P}_{-B}$. It follows that if $X$ is any SV winner in $\mathbf{P}$, then the match $X$ vs.~$Y$ that witnesses this fact also has a positive margin. This implies that $X$ is stable for SV in $\mathbf{P}$.\end{proof}

 \noindent By contrast, Plurality, IRV, Minimax, Beat Path, and Ranked Pairs all violate the principle. That Beat Path and Ranked Pairs violate it can be seen by their choices of winners in Figure~\ref{ExFigA}.

Next we turn to the Condorcet criterion and a strengthening of it.

\begin{proposition} If a voting method $\mathsf{M}$ satisfies stability for winners with tiebreaking, then it satisfies the \emph{Condorcet criterion}:  if $A$ is a Condorcet winner in a profile $\mathbf{P}$, then $A$ is the unique winner according to $\mathsf{M}$ in $\mathbf{P}$.
\end{proposition}
\begin{proof} By induction on the number of candidates. If $A$ is a Condorcet winner in $\mathbf{P}$, then $A$ is the Condorcet winner in $\mathbf{P}_{-B}$ for every candidate $B\neq A$. Hence by the inductive hypothesis, $A$ is the unique winner in $\mathbf{P}_{-B}$ for every $B\neq A$. It follows that $A$ is the only candidate stable for $\mathsf{M}$ in $\mathbf{P}$, so by stability for winners with tiebreaking, $A$ is the unique winner according to $\mathsf{M}$.\end{proof}

\begin{corollary}\label{CondorcetLem} SV satisfies the Condorcet criterion.
\end{corollary}

Recall that the Smith set $Smith(\mathbf{P})$ of a profile $\mathbf{P}$ is the smallest set of candidates such that each candidate in the set beats each candidate outside the set head-to-head. The following proposition strengthens Corollary \ref{CondorcetLem}.

\begin{proposition}\label{SmithConsistency} SV satisfies the \textit{Smith criterion}: if $A$ is an SV winner in $\mathbf{P}$, then $A$ belongs to the Smith set of $\mathbf{P}$.
\end{proposition}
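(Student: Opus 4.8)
The plan is to prove the Smith criterion by strong induction on the number of candidates in $\mathbf{P}$, closely mirroring the structure of the proof of Proposition \ref{CondorcetLem}. Let $A$ be an SV winner in $\mathbf{P}$, witnessed by an earliest match $A$ vs.~$B$ (by descending margin) such that $A$ is an SV winner in $\mathbf{P}_{-B}$. The base case (one candidate) is trivial, and if $\mathbf{P}$ already has a Condorcet winner, then Proposition \ref{CondorcetLem} gives the result immediately since the Condorcet winner is the unique element of the Smith set. So the substantive case is when there is no Condorcet winner and $|Smith(\mathbf{P})| \geq 3$.

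The two key facts I would want to establish are: first, that the witnessing match $A$ vs.~$B$ has positive margin (so $A$ beats $B$ head-to-head), and second, that the relationship between $Smith(\mathbf{P})$ and $Smith(\mathbf{P}_{-B})$ is controlled enough to push the induction through. For the first fact, I would argue as in the earlier propositions: because the Smith set is nonempty and every candidate in it beats every candidate outside it, some candidate in the Smith set is stable for SV (removing a suitable dominated candidate cannot dislodge it), so the earliest witnessing match in the whole list must have positive margin; hence $A$ beats $B$. The crux is then the second fact. I would like to show that when $A$ is an SV winner in $\mathbf{P}_{-B}$, the inductive hypothesis places $A$ in $Smith(\mathbf{P}_{-B})$, and then argue that $Smith(\mathbf{P}_{-B}) \subseteq Smith(\mathbf{P})$ under the relevant conditions, or at least that $A$'s membership transfers. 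The natural sub-claim is: if $B \notin Smith(\mathbf{P})$, then $Smith(\mathbf{P}_{-B}) = Smith(\mathbf{P})$, since deleting a candidate outside the Smith set leaves the dominance relation among the remaining candidates, and the dominance of the Smith set over all outsiders, intact. Combined with $A \in Smith(\mathbf{P}_{-B})$ this would give $A \in Smith(\mathbf{P})$ in that sub-case.

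The main obstacle is the case where the removed candidate $B$ lies \emph{inside} $Smith(\mathbf{P})$, since then $Smith(\mathbf{P}_{-B})$ need not coincide with (or even be contained in) $Smith(\mathbf{P})$, and a candidate that is in the Smith set of $\mathbf{P}_{-B}$ could in principle have fallen outside $Smith(\mathbf{P})$. To handle this, I would argue by contradiction: suppose $A \notin Smith(\mathbf{P})$. Then every member $S$ of $Smith(\mathbf{P})$ beats $A$ head-to-head, and in particular the margins of such $S$ vs.~$A$ are positive. The strategy is to exhibit a match appearing \emph{strictly earlier} in the descending list than $A$ vs.~$B$ whose first candidate is nonetheless an SV winner after deletion of its second candidate, contradicting that $A$ vs.~$B$ was earliest. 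Concretely, I would take a candidate $S \in Smith(\mathbf{P})$ beating $A$, note by the inductive hypothesis and the preservation lemma that some Smith-set member remains an SV winner after an appropriate deletion, and compare margins: because every outsider (including $A$) loses to the Smith set, the witnessing match for a Smith winner should outrank $A$ vs.~$B$. Making this margin comparison airtight—ensuring the Smith-internal witnessing match genuinely precedes $A$ vs.~$B$ in the ordering rather than merely existing—is the delicate point, and I would expect to lean on the uniquely-weighted reduction or a careful tie-handling argument to close it, together with the preservation lemma $Smith(\mathbf{P}_{-C}) \subseteq Smith(\mathbf{P})$ for $C$ outside the Smith set to relocate the relevant winners.
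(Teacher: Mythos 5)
There is a genuine gap, and it sits exactly where you flag the ``main obstacle.'' Your treatment of the case $B \notin Smith(\mathbf{P})$ is fine, but your claim that when $B \in Smith(\mathbf{P})$ the set $Smith(\mathbf{P}_{-B})$ ``need not \ldots even be contained in'' $Smith(\mathbf{P})$ is wrong under the one hypothesis you actually have available: that $B$ is not a Condorcet winner in $\mathbf{P}$ (which follows at once from the Condorcet criterion, since $A$ is an SV winner in $\mathbf{P}$ and $A \neq B$). The paper's proof rests on the single lemma: if $A \in Smith(\mathbf{P}_{-B})$ and $B$ is not a Condorcet winner in $\mathbf{P}$, then $A \in Smith(\mathbf{P})$. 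This covers both of your cases uniformly. The point you missed is that if $B \in Smith(\mathbf{P})$ and $B$ is not a Condorcet winner, then $Smith(\mathbf{P}) \setminus \{B\}$ is nonempty and is still a dominant set in $\mathbf{P}_{-B}$ (each of its members beats every candidate outside $Smith(\mathbf{P})$); since dominant sets are linearly ordered by inclusion, $Smith(\mathbf{P}_{-B}) \subseteq Smith(\mathbf{P}) \setminus \{B\} \subseteq Smith(\mathbf{P})$. So containment \emph{does} hold, the induction closes immediately, and no contradiction argument is needed.

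The contradiction argument you sketch in its place does not go through as stated. You want to exhibit a match strictly earlier in the descending-margin list than $A$ vs.~$B$ whose first candidate wins after the deletion, but nothing guarantees that a Smith-set member's witnessing match has a larger margin than that of $A$ vs.~$B$: members of $Smith(\mathbf{P})$ beat outsiders, but possibly by tiny margins, while $A$'s margin over $B$ could be large. You acknowledge this is ``the delicate point'' and propose to lean on a ``uniquely-weighted reduction,'' but that reduction is not available here --- the proposition is asserted for arbitrary profiles, and Proposition \ref{Smith1} (the uniquely-weighted result) is proved \emph{after} and \emph{using} the Smith criterion, not before it. Separately, your first ``key fact'' (that the witnessing match has positive margin) is never needed for this proposition. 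The fix is simply to prove and use the preservation lemma in the form above, with ``$B$ is not a Condorcet winner'' rather than ``$B \notin Smith(\mathbf{P})$'' as the hypothesis.
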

 
\begin{proof} We use a preliminary lemma about the Smith set, which is easy to check: if $A\in Smith(\mathbf{P}_{-B})$, and $B$ is not a Condorcet winner in $\mathbf{P}$, then $A\in Smith(\mathbf{P})$. Now we prove the proposition by induction on the number of candidates. Suppose $A$ is an SV winner in $\mathbf{P}$ as witnessed by $A$ vs.~$B$, so $A$ is an SV winner in $\mathbf{P}_{-B}$. Since $A$ is an SV winner in $\mathbf{P}$, it follows by Corollary~\ref{CondorcetLem} that $B$ is not a Condorcet winner. Since $A$ is an SV winner in $\mathbf{P}_{-B}$, by the inductive hypothesis, $A\in Smith(\mathbf{P}_{-B})$, so by the lemma, $A\in Smith(\mathbf{P})$.
\end{proof}

\begin{corollary} SV satisfies the \textit{Condorcet loser criterion}: if $A$ loses head-to-head to every other candidate in $\mathbf{P}$ (and there is more than one candidate in $\mathbf{P}$), then $A$ is not an SV winner in $\mathbf{P}$.
\end{corollary}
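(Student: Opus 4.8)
The plan is to derive the Condorcet loser criterion directly from the Smith criterion (Proposition~\ref{SmithConsistency}), which has just been established. The key observation is that a Condorcet loser---a candidate who loses head-to-head to every other candidate---cannot belong to the Smith set whenever there are at least two candidates, and so cannot be an SV winner.

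First I would dispose of the trivial case: if $A$ is the only candidate in $\mathbf{P}$, then the hypothesis that $A$ loses head-to-head to every \emph{other} candidate is vacuous, but in this case the Smith set is $\{A\}$ and there is no issue; so assume there are at least two candidates, and let $A$ be a Condorcet loser. The heart of the argument is to show $A\notin Smith(\mathbf{P})$. To see this, recall that the Smith set is the smallest set $S$ such that every candidate in $S$ beats every candidate outside $S$ head-to-head. Since $A$ loses to every other candidate, $A$ does not beat any other candidate head-to-head; hence if $A$ were in the Smith set $S$ and $S$ contained some other candidate $C$, the requirement that $A$ beat $C$ would fail. The only remaining possibility is $S=\{A\}$, but then every candidate outside $S$ would have to be beaten by $A$, again contradicting that $A$ beats no one (as there is at least one other candidate). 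Therefore $A\notin Smith(\mathbf{P})$.

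Finally I would invoke the contrapositive of Proposition~\ref{SmithConsistency}: since every SV winner belongs to the Smith set, and $A\notin Smith(\mathbf{P})$, it follows that $A$ is not an SV winner in $\mathbf{P}$, as desired.

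The main obstacle---though a minor one---is making the argument that $A\notin Smith(\mathbf{P})$ fully rigorous from the definition of the Smith set as the \emph{smallest} such set, rather than merely asserting it. One clean way to avoid fussing with minimality is to note that the set of all candidates other than $A$ already satisfies the defining closure property of a dominant set (every candidate in it beats $A$, the only candidate outside it), so the Smith set, being the smallest dominant set, is contained in it and thus excludes $A$. I expect this to be the only point requiring care; everything else is an immediate appeal to the preceding proposition.
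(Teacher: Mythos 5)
Your proof is correct and matches the paper's intent exactly: the corollary is stated there as an immediate consequence of the Smith criterion (Proposition~\ref{SmithConsistency}), with the implicit observation that a Condorcet loser cannot lie in the Smith set. Your extra care in handling minimality via the dominant set of all candidates other than $A$ is a clean way to make rigorous what the paper leaves unstated.
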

\noindent Of the voting methods discussed in this paper, only Plurality, Plurality with Runoff, and Minimax violate the Condorcet loser criterion.

Finally, a voting method satisfies Independence of Smith-Dominated Alternatives (ISDA) if removing a candidate outside the Smith set does not change who wins. Stable Voting satisfies ISDA for any uniquely-weighted profile, but in non-uniquely-weighted profiles, Stable Voting may use candidates outside the Smith set to break ties.  For example, suppose that $A$, $B$, and $C$ form a perfectly symmetrical cycle: $A$ beats $B$ by 1, $B$ beats $C$ by 1, and $C$ beats $A$ by $1$. Further suppose that $A$ beats $D$ by 3, whereas $B$ and $C$ only beat $D$ by 1. Stable Voting will elect $A$ in this case, whereas if we were to first restrict to the Smith set by removing $D$, then there would be a tie between $A$, $B$, and $C$. 

\begin{restatable}{proposition}{ISDA}\label{Smith1} If $\mathbf{P}$ is uniquely weighted, and $B$ is not in the Smith set of $\mathbf{P}$, then $A$ is an SV winner in $\mathbf{P}$ if and only if $A$ is an SV winner in $\mathbf{P}_{-B}$.
\end{restatable}
\noindent For a proof of Proposition \ref{Smith1}, see Appendix \ref{Appendix}.

\begin{corollary}\label{SVSmith} For any uniquely-weighted profile $\mathbf{P}$, $SV(\mathbf{P})=SVS(\mathbf{P})$, where $SVS$ is the voting method that first eliminates all candidates outside the Smith set and then runs SV on the restricted profile.
\end{corollary}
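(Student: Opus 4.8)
The plan is to derive the corollary from Proposition \ref{Smith1} by deleting the non-Smith candidates one at a time. Since $SVS$ runs SV on the profile obtained by removing \emph{all} candidates outside $Smith(\mathbf{P})$ simultaneously, whereas Proposition \ref{Smith1} only licenses the removal of a \emph{single} non-Smith candidate without changing the SV winners, the essential task is to show that these single deletions can be chained together.

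Two preliminary facts make the chaining work. First, removing a candidate from a uniquely-weighted profile leaves it uniquely weighted: the margins among the surviving candidates are unchanged, so no new coincidence among margins can be created. (This is already used inside the proof of Proposition \ref{Smith1}.) Second---and this is the real crux---the Smith set is invariant under deletion of a candidate outside it: if $B\notin Smith(\mathbf{P})$, then $Smith(\mathbf{P}_{-B})=Smith(\mathbf{P})$. This holds because deleting $B$ alters no head-to-head comparison among the remaining candidates, so $Smith(\mathbf{P})$ still beats everything outside it and remains the smallest set with that property. I would record this as a one-line lemma; it is a close relative of the Smith lemma already invoked in the proof of Proposition \ref{SmithConsistency}.

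With these in hand, enumerate the candidates outside the Smith set as $B_1,\dots,B_k$ and form the chain of profiles $\mathbf{P}^0=\mathbf{P}$ and $\mathbf{P}^i=(\mathbf{P}^{i-1})_{-B_i}$. By Smith-set invariance, each $\mathbf{P}^i$ has the same Smith set as $\mathbf{P}$, so $B_{i+1}\notin Smith(\mathbf{P}^i)$; and each $\mathbf{P}^i$ is uniquely weighted. Hence Proposition \ref{Smith1} applies at every step, giving $SV(\mathbf{P}^{i-1})=SV(\mathbf{P}^i)$ for each $i$ and therefore $SV(\mathbf{P})=SV(\mathbf{P}^k)$. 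Since $\mathbf{P}^k$ is precisely the restriction of $\mathbf{P}$ to $Smith(\mathbf{P})$---the profile on which $SVS$ runs SV---we conclude $SV(\mathbf{P})=SVS(\mathbf{P})$.

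The only substantive obstacle is the Smith-set invariance lemma; the rest is bookkeeping. One point I would be careful about is that the biconditional in Proposition \ref{Smith1} is an equality of \emph{winner sets}, so each step preserves the entire set of SV winners, not merely one candidate. By Proposition \ref{QRLem} this set is a singleton throughout the uniquely-weighted chain, so the final identity is literally an equality of the unique winners.
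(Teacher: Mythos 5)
Your proof is correct and is exactly the intended derivation: the paper states this as an immediate corollary of Proposition \ref{Smith1}, to be obtained by iterating that proposition over the non-Smith candidates, which is what you do. Your two auxiliary observations (preservation of unique weightedness under deletion, and invariance of the Smith set under deletion of a non-Smith candidate) are both true and are precisely what is needed to justify the chaining.
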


\noindent In light of Corollary \ref{SVSmith}, if $\mathbf{P}$ is uniquely weighted, we can more efficiently compute SV winners by only listing pairs $A$~vs.~$B$ where both $A$ and $B$ belong to the Smith set. Even if $\mathbf{P}$ is not uniquely weighted, one could use SVS at the expense of sacrificing some of the tiebreaking power of~SV.

\section{Costs of Stable Voting}\label{CostsSection}

In this section, we briefly discuss some of the costs of Stable Voting. 

One is the cost of computing SV winners using the definition in Section~\ref{StableVotingSection}, which is an example of a \textit{recursive definition}: computing the SV winners in $\mathbf{P}$ involves computing the SV winners in simpler profiles of the form $\mathbf{P}_{-B}$. IRV also has a recursive definition: $A$ is an IRV winner in $\mathbf{P}$ if $A$ is the only candidate in $\mathbf{P}$ or there is some candidate $B$ with the minimal number of first-place votes in $\mathbf{P}$ such that $A$ is an IRV winner in the profile $\mathbf{P}_{-B}$.\footnote{\label{PUT}This is the ``parallel universe'' version of IRV (see~\citealt{Wangetal2019}). Many cities that use IRV prescribe that if there are multiple candidates with the minimal number of first-place votes, the candidate to be eliminated is determined by lottery (we thank Deb Otis for sharing FairVote's database of IRV rules). Then what we call an ``IRV winner'' in the text is a candidate with nonzero probability of being elected by the lottery-based version of IRV.
} However, the recursion for IRV typically terminates faster (since the recursion tree is typically not as wide as for SV). Our current computer implementation of SV can comfortably handle profiles with up to 20 candidates or larger profiles that are uniquely weighted with up to 20 candidates in the Smith set. Many elections of officers, votes on job shortlists, etc., involve no more than 20 candidates; and if there are many voters, we expect a uniquely-weighted profile in which the Smith set typically contains only a small number of ``front runners,'' even if there are over 20 candidates on the ballot. Thus, SV is practical in these contexts, though it is not currently practical in all voting contexts.

Another cost of SV is some violations---in an extremely small fraction of profiles---of voting criteria satisfied by some other voting methods.  The most important to discuss is \textit{monotonicity}. This criterion states that if $A$ wins in a profile $\mathbf{P}$, and $\mathbf{P}'$ is obtained from $\mathbf{P}$ by moving $A$ up in some voter's ranking, then $A$ should still win in $\mathbf{P}'$. Like IRV, SV can violate monotonicity. For SV, the basic reason is that moving $A$ up in a ranking  also means moving some other candidate $B$ \textit{down} in that ranking, and moving $B$ below $A$ may benefit another candidate $C$ whose closest competitor in some subelection is $B$, whereas moving $B$ below $A$ might not meaningfully benefit $A$ at all. 

Figure \ref{MonFig} shows the estimated percentages of linear profiles for 6 candidates and up to 51 voters in which IRV, a Condorcet-consistent variant of IRV known as Smith IRV (first restrict the profile to the Smith set and then apply IRV to the restricted profile), and SV violate monotonicity, meaning that there is \textit{some candidate} $A$ and \textit{some voter} $i$ in the profile such that moving $A$ up in $i$'s ranking causes $A$ to go from winner to loser or moving $A$ down in $i$'s ranking causes $A$ to go from loser to winner. Although the frequency with which SV violates monotonicity is not zero, it is minuscule compared to the frequencies for IRV and Smith~IRV. 

\begin{figure}[h]
\begin{center}
\includegraphics[scale=0.50]{./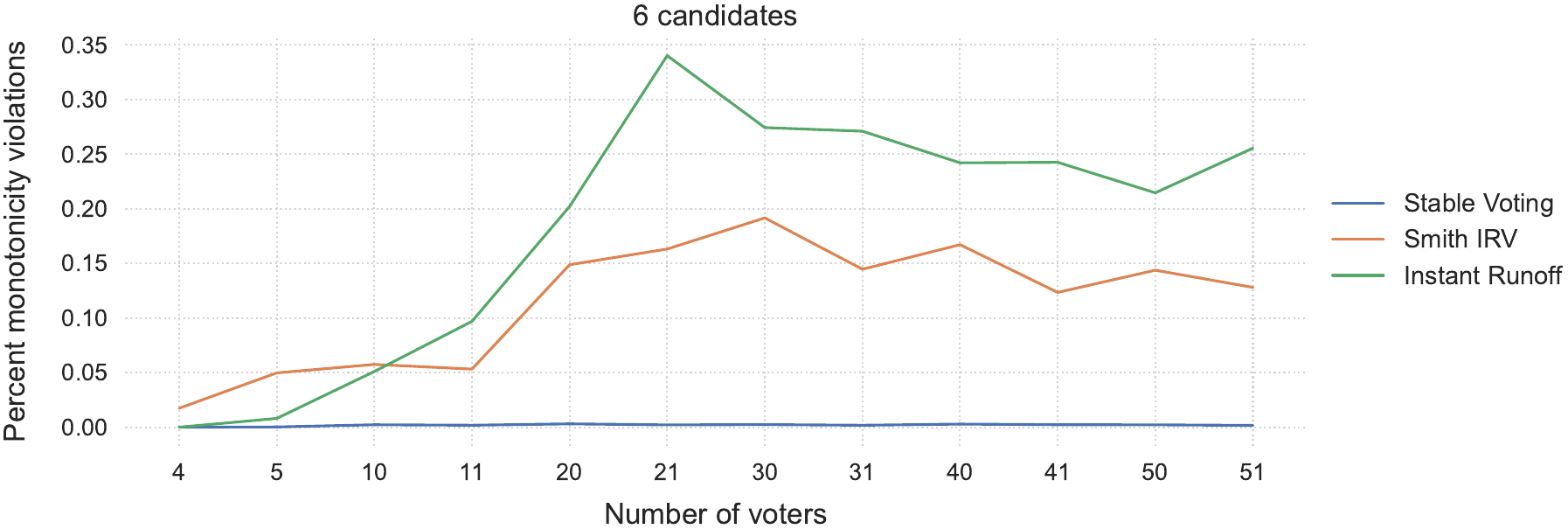}
\end{center}
\caption{estimated percentages of linear profiles witnessing monotonicity violations for IRV, Smith IRV, and SV. For each data point, we randomly sampled 50,000 profiles. More than 51 voters required too much computation time. The percentage of profiles with a single-voter monotonicity violation goes to zero for all methods as the number of voters increases.}\label{MonFig}
\end{figure}

\section{Conclusion}\label{ConclusionSection}

We have introduced Stable Voting and discussed some of its benefits and costs. Many open questions remain.  For example, are there ways of calculating SV winners that are more efficient in practice? What is the computational complexity of this problem? How does SV perform in simulations using other probability models on profiles (spatial models, urn models, Mallow's models, etc.)? What are voters' incentives for strategic voting? How do voters react to the use of SV in elections (e.g., at \href{https://stablevoting.org}{stablevoting.org})? These are only a few of the questions that must be addressed for a full assessment of Stable Voting.

\subsection*{Acknowledgements}
For helpful discussions of voting, we thank Felix Brandt, Yifeng Ding, Jobst Heitzig, Mikayla Kelley, Xavier Mora, Klaus Nehring, Chase Norman, Markus Schulze, Warren D. Smith, Andrew Souther, Nicolaus Tideman, Saam Zahedian, and Bill Zwicker. We are grateful to Nicolaus Tideman for the stimulus to write this paper.

\appendix

\section{Appendix}\label{Appendix}

The proof of Proposition \ref{Smith1} uses the following lemma. Recall the definition of the Smith set in Section~\ref{BenefitsSection} and the definition of \textit{defeat} in Section \ref{StableVotingSection}.

\begin{lemma}\label{SmithDefeat} For any profile $\mathbf{P}$ and distinct candidates $W$, $V$, and $U$ in $\mathbf{P}$, if $W$ is not in the Smith set of $\mathbf{P}$ but $V$ is, then $U$ defeats $V$ in $\mathbf{P}$ if and only if $U$ defeats $V$ in $\mathbf{P}_{-W}$.
\end{lemma}

\begin{proof} Assume that $W\not\in Smith(\mathbf{P})$ but $V\in Smith(\mathbf{P})$. We prove that $U$ does \textit{not} defeat $V$ in $\mathbf{P}$ if and only if $U$ does \textit{not} defeat $V$ in $\mathbf{P}_{-W}$. If the margin of $U$ over $V$ is not positive, then $U$ does not defeat $V$ in either $\mathbf{P}$ or $\mathbf{P}_{-W}$ (since $V,U$ is a list of candidates such that the margin of each candidate over the next is at least the margin of $U$ over $V$). So suppose the margin of $U$ over $V$ is positive. Then since $V\in Smith(\mathbf{P})$, it follows that $U\in Smith(\mathbf{P})$. Now by the definition of defeat, $U$ does not defeat $V$ in $\mathbf{P}$ (resp.~$\mathbf{P}_{-W}$) if and only if there is a list $X_1,\dots,X_n$ of candidates in $\mathbf{P}$ (resp.~$\mathbf{P}_{-W}$) with $V=X_1$ and $U=X_n$ such that for each $i<n$, the margin of $X_i$ over $X_{i+1}$ is at least the margin of $U$ over $V$; hence the margin is positive. Since a candidate outside $Smith(\mathbf{P})$ cannot have a positive margin over one in $Smith(\mathbf{P})$, and $U\in Smith(\mathbf{P})$, it follows that each $X_i$ belongs to $Smith(\mathbf{P})$. Then the existence of such a list $X_1,\dots,X_n$ of candidates in $\mathbf{P}$ is equivalent to the existence of such a list of candidates in $\mathbf{P}_{-W}$, given that $W\not\in Smith(\mathbf{P})$.\end{proof}

\ISDA*

\begin{proof} By induction on the number of candidates. First suppose $A$ is an SV winner in $\mathbf{P}$, as witnessed by $A$ vs.~$C$, i.e., this  is an earliest match in the list of matches by descending margin such that the first candidate  wins after removing the second, so $A$ is an SV winner in $\mathbf{P}_{-C}$, and the second does not defeat the first, so $C$ does not defeat $A$ in $\mathbf{P}$.  If $B=C$, then $A$ is an SV winner in $\mathbf{P}_{-B}$, so suppose $B\neq C$.  If there is a Condorcet winner (CW) in $\mathbf{P}$, then by Corollary~\ref{CondorcetLem}, $A$ is the CW in $\mathbf{P}$ and hence in $\mathbf{P}_{-B}$, so $A$ is an SV winner in $\mathbf{P}_{-B}$. Suppose there is no CW in $\mathbf{P}$. By Proposition \ref{SmithConsistency}, $A\in Smith(\mathbf{P})$. We claim that $A$ vs.~$C$ witnesses $A$ being an SV winner in $\mathbf{P}_{-B}$, i.e., $A$ vs.~$C$ is an earliest match in the list for $\mathbf{P}_{-B}$ such that the first candidate wins after removing the second, so $A$ wins in $(\mathbf{P}_{-B})_{-C}$, and the second does not defeat the first, so $C$ does not defeat $A$ in $\mathbf{P}_{-B}$. Indeed, since $C$ does not defeat $A$ in $\mathbf{P}$, $C$ does not defeat $A$ in $\mathbf{P}_{-B}$ by Lemma  \ref{SmithDefeat}. To see that $A$ wins in $(\mathbf{P}_{-B})_{-C}$, since $(\mathbf{P}_{-B})_{-C}=(\mathbf{P}_{-C})_{-B}$, we show that $A$ wins in $(\mathbf{P}_{-C})_{-B}$. Since $B\not\in Smith(\mathbf{P})$ and $C$ is not a CW in $\mathbf{P}$,  $B\not\in Smith(\mathbf{P}_{-C})$ (cf.~the proof of Proposition \ref{SmithConsistency}), so the fact that $A$ wins in $\mathbf{P}_{-C}$ and the (left-to-right direction of the) inductive hypothesis together imply that $A$ wins in $(\mathbf{P}_{-C})_{-B}$. To see that $A$ vs.~$C$ satisfies the ``earliest'' claim in  $\mathbf{P}_{-B}$, suppose for contradiction that there is a match $X$ vs.~$Y$ in  $\mathbf{P}_{-B}$ with a larger margin than $A$~vs.~$C$ such that $X$ wins in $(\mathbf{P}_{-B})_{-Y}$ and hence in $(\mathbf{P}_{-Y})_{-B}$, and $Y$ does not defeat $X$ in $\mathbf{P}_{-B}$. Since $B\not\in Smith(\mathbf{P})$, and  $Y$ is not a CW in $\mathbf{P}$, $B\not\in Smith(\mathbf{P}_{-Y})$. It follows by the (right-to-left direction of the) inductive hypothesis that $X$ wins in $\mathbf{P}_{-Y}$. Hence $X\in Smith(\mathbf{P}_{-Y})$ by Proposition \ref{SmithConsistency}, which with the fact that $Y$ is not a CW in $\mathbf{P}$ implies that $X\in Smith(\mathbf{P})$ (cf.~the proof of Proposition~\ref{SmithConsistency}). Since $B\not\in Smith(\mathbf{P})$ but $X\in Smith(\mathbf{P})$, the fact that $Y$ does not defeat $X$ in $\mathbf{P}_{-B}$ implies that $Y$ does not defeat $X$ in $\mathbf{P}$ by Lemma \ref{SmithDefeat}. But that $X$ wins in $\mathbf{P}_{-Y}$,  $Y$ does not defeat $X$ in $\mathbf{P}$, and the margin of $X$ vs. $Y$ is greater than that of $A$ vs. $C$ contradicts our initial assumption about $A$~vs.~$C$. 

Now suppose $A$ is an SV winner in $\mathbf{P}_{-B}$. Since $\mathbf{P}$ is uniquely weighted, so is $\mathbf{P}_{-B}$, so by Proposition~\ref{QRLem}, there is a unique SV winner in $\mathbf{P}_{-B}$. Hence it is $A$. Now let $A'$ be any SV winner in $\mathbf{P}$. Then as in the previous paragraph, $A'$ is an SV winner in $\mathbf{P}_{-B}$,  so $A'=A$.\end{proof}

\bibliographystyle{plainnat}
\bibliography{SV}

\end{document}